\pdfoutput=1
\documentclass[letterpaper, 10 pt, conference]{ieeeconf}  % Comment this line out
\usepackage{amsmath}
\usepackage{amssymb}
\usepackage{algorithm}
\usepackage{tikz}
\usepackage{hyperref}
\usepackage{url}
\usepackage[noend]{algpseudocode}
\usepackage{graphicx}
\usepackage{pgfplots}
\pgfplotsset{compat=1.11}
\usepackage{mathtools}
\usepackage[noadjust]{cite}
\usepgfplotslibrary{fillbetween} 
\usepackage{subcaption}
\usepackage{marvosym}
\usepackage{fontawesome5}
\usepackage{centernot}

\usepackage[colorinlistoftodos,prependcaption]{todonotes}
\IEEEoverridecommandlockouts                              % This command is only
% needed if you want to
% use the \thanks command
\overrideIEEEmargins

\algrenewcommand\textproc{\textsc}
\DeclareMathOperator*{\argmax}{arg\,max}
\DeclareMathOperator*{\argmin}{arg\,min}
\DeclareMathOperator{\sgn}{sgn}
\newcommand{\gradxhat}{K^T \nabla h_S(\hat{x})}

% See the \addtolength command later in the file to balance the column lengths
% on the last page of the document

% The following packages can be found on http:\\www.ctan.org
%\usepackage{graphics} % for pdf, bitmapped graphics files
%\usepackage{epsfig} % for postscript graphics files
%\usepackage{mathptmx} % assumes new font selection scheme installed
%\usepackage{times} % assumes new font selection scheme installed
%\usepackage{amsmath} % assumes amsmath package installed
%\usepackage{amssymb}  % assumes amsmath package installed

\title{Stealthy Deactivation of Safety Filters}

\author{Daniel Arnström, André M.H. Teixeira%
\thanks{D. Arnstr\"om and A. Teixeira are with the Division of Systems and Control, Depratment of Information Technology, Uppsala University, Sweden 
{\tt\small \{daniel.arnstrom,andre.teixeira\}@it.uu.se}
This work is supported by the Swedish Foundation for Strategic Research.
}%
}

\begin{document}
\renewcommand{\baselinestretch}{1.0}

\definecolor{set19c1}{HTML}{E41A1C}
\definecolor{set19c2}{HTML}{377EB8}
\definecolor{set19c3}{HTML}{4DAF4A}
\definecolor{set19c4}{HTML}{984EA3}
\definecolor{set19c5}{HTML}{FF7F00}
\definecolor{set19c6}{HTML}{FFFF33}
\definecolor{set19c7}{HTML}{A65628}
\definecolor{set19c8}{HTML}{F781BF}
\definecolor{set19c9}{HTML}{999999}

\maketitle
\thispagestyle{empty}
\pagestyle{empty}
\newtheorem{proposition}{Proposition}
\newtheorem{lemma}{Lemma}
\newtheorem{corollary}{Corollary}
\newtheorem{remark}{Remark}
\newtheorem{theorem}{Theorem}
\newtheorem{definition}{Definition}
\newtheorem{assumption}{Assumption}
\newtheorem{example}{Example}
\newtheorem{problem}{Problem}

\pgfplotstableread{data/new/gradient1dof.dat}{\gradonedofnew}
\pgfplotstableread{data/new/gradient2dof.dat}{\gradtwodofnew}
\pgfplotstableread{data/new/none.dat}{\nonadversnew}
\pgfplotstableread{data/new/random.dat}{\randomnew}

\begin{abstract}
 Safety filters ensure that only safe control actions are executed. We propose a simple and stealthy false-data injection attack for deactivating such safety filters; in particular, we focus on deactivating safety filters that are based on control-barrier functions. The attack injects false sensor measurements to bias state estimates to the interior of a safety region, which makes the safety filter accept unsafe control actions. To detect such attacks, we also propose a detector that detects biases manufactured by the proposed attack policy, which complements conventional detectors when safety filters are used. The proposed attack policy and detector are illustrated on a double integrator example.

\end{abstract}

\section{Introduction}

Cyber-physical systems (CPSs) integrate communication, computation, and control technologies, which enable advanced control systems that are efficient, sustainable, and resilient \cite{dibaji2019systems}. The cyber components of CPSs do, however, also open up for new vulnerabilities since they enable cyber attacks \cite{teixeira2015secure}\cite[\S 4.C]{roadmap}. To address these novel vulnerabilities, several control-theoretic approaches that analyze and improve the security of CPSs have been proposed \cite{chong2019tutorial}. Such approaches are especially important for CPSs that are \textit{safety critical}, since successful attacks to such systems can have severe, even fatal, consequences \cite[\S 4.B]{roadmap}.

In parallel with the development of the above-mentioned work on security for CSPs, a promising framework has been developed for safety of control systems through so-called \textit{safety filters} \cite{wabersich2023data,tomlin2003computational,ames2017cbf,wabersich2018linear}. 
A safety filter takes in a desired control action and the current state of the system, and outputs a filtered control action that guarantees a safe behaviour of the system. Such filters separate safety from performance, since any controller can be used to produce the desired control action; this allows for safety guarantees even when using unpredictable controllers such as experimental data-driven controllers \cite{coulson2019data}.

In this paper we consider a cyber attack that injects false data on the communication channel from sensor measurements to a state observer, as illustrated in Figure \ref{fig:overview}. The goal of the attack is to produce synthetic measurements $y^a$ that ``deactivate'' the safety filter, which in turn allows for dangerous control actions to be applied to the plant. Our specific focus in this paper is on safety filters that are based on control-barrier functions (CBFs), but the main idea applies to other types of safety filters (see, e.g., \cite{wabersich2023data} for a recent survey on safety filters). 

A common protection against false-data injections is anomaly detectors \cite{murguia2019model}, which raise an alarm if the obtained measurement is too far from the expected measurement. Here we specifically consider \textit{stealthy} attacks, which are designed to circumvent such anomaly detectors \cite{teixeira2015secure}. 

The proposed stealthy false-data injection attack biases the state estimates toward the center of a safe set, making the safety filter accept unsafe control actions.  An important difference with the proposed attack and classical false-data injection attacks (see, e.g., \cite{mo2010false}) is that the adversary does not need direct access to a dynamical model of the system under attack, which is due to the attack being tailored to deactivate safety filters.
The contributions of this work can be interpreted as extending some of the work on adversarial examples from a static setting \cite{goodfellow2015adversarial} to a dynamic setting.

In addition to an attack that biases state estimates towards the center of a safe set, we also propose a way of detecting such biases. To this end, we proposed a detector that correlates the difference of expected and actual measurements with directions towards the interior of the safe set.  

\begin{figure}
    \centering
    \begin{tikzpicture}[scale=0.8, transform shape]
    \usetikzlibrary{shapes,arrows}
    \usetikzlibrary{calc}
    \tikzstyle{block} = [
    rectangle, draw, fill=white!20, 
    text width=5.25em, text centered, 
    rounded corners, minimum height=3em]

    \node[block] (detect) at (0,-2.5) {\faBell \:Detector};
    \node[block] (test1) at (0,1.05) {\faGamepad\:Primary \\Controller};
    \node[block] (test2) at (3.75,0) {\faShield* Safety \\ Filter};
    \node[block] (test3) at (7,0) {\faPlane\:Plant};
    \node[block] (observ) at (0,-1.05) {\faBinoculars\:Observer};
    \node (spy) at (4,-2.3) {\huge \faUserSecret};
    %\node[red,left of=test2,xshift=-1em,yshift=0.75em] (adv-ctrl) {\Lightning $x^a$};
    \node[red,right of=observ,xshift=1.5em,yshift=-1em] (adv-y) {$y^a$};
    %\path [draw, -latex'] (test1) -- ++(2.4,0)  node[above]{$u_{\text{des}}$}-| node[right,red,yshift=-0.5em]{\Lightning $u^a$} (test2);
    \path [draw, -latex'] (test1) -- ++(2.4,0)  node[above]{$u_{\text{des}}$}-|(test2);
    \path [draw, -latex'] (test2) ++(1.6,0) |- ($(observ.east)+(0,+0.1)$);
    \path [draw, -latex'] (test2) --node[above]{$u_{\text{act}}$} (test3);
    \path [draw, latex'-] ($(observ.east)+(0,-0.1)$) -- ++(3,0) -| node[above right,yshift=0.5em]{$y$} (test3);
    \path [draw, -latex'] (observ) --node[left]{$\hat{x}$} (test1);
    \path [draw, -latex'] (observ) |- (test2);
    \path [draw, -latex'] (observ) --node[left]{$r$} (detect);
    \path[draw,-latex,red] (spy) -- node[right]{\Lightning} ++(0,1);

\end{tikzpicture}
    \caption{\small Overview of the system architecture considered in this paper. The safety filter produces a safe control action $u_{\text{act}}$ given a desired control $u_{\text{des}}$ and the current estimated state $\hat{x}$. An adversary tries to deactivate this filter through false-data injections on the communication channel between the sensors and the observer by replacing the true measurement $y$ with a synthetic measurement $y^a$.}
    \label{fig:overview}
\end{figure}
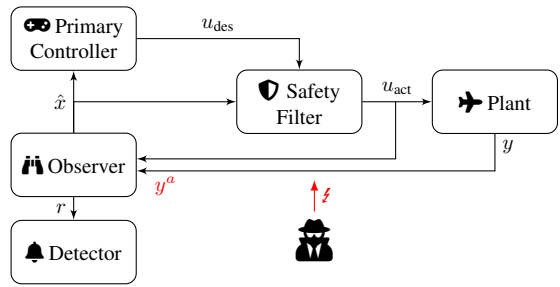

To summarize, the main contributions of the paper are: 
\begin{enumerate}
  \item A stealthy false-data injection attack that biases state estimates to deactivate CBF-based safety filters.
  \item An anomaly detector that detects measurement residuals that are biased toward the interior of the safe set. 
\end{enumerate}
\subsection{Definitions and notation}
The gradient of a function $f(x)$ is denoted $\nabla f(x) \triangleq \frac{\partial f}{\partial x}$ and is represented as a column vector. The $i$th element of a vector $v$ is denoted $[v]_i$.
The boundary of a set $S$ is denoted $\partial S$.
A continuous function $\alpha : (-a,b) \to \mathbb{R}$, with $a,b>0$, is said to be extended class $\mathcal{K}$ ($\alpha \in \mathcal{K}^e$) if it is strictly increasing and $\alpha(0) = 0$.
\section{Preliminaries and problem formulation}
We consider dynamical systems of the form 
\begin{equation}
    \label{eq:sys}
    \dot{x} = f(x,u),
\end{equation}
with state $x \in \mathbb{R}^{n_x}$ and control $u  \subseteq \mathbb{R}^{n_u}$. The dynamics of the system is given by $f : \mathbb{R}^{n_x}\times \mathbb{R}^{n_u} \to \mathbb{R}^{n_x}$.
Moreover, the system generates measurements $y\in \mathbb{R}^{n_y}$ according to
\begin{equation}
 y = h(x) + e,
\end{equation}
with the measurement function $h: \mathbb{R}^{n_x} \to \mathbb{R}^{n_y}$ and measurement noise $e$ (for our purpose, we make no particular assumptions on $e$.)

An estimate $\hat{x}\in \mathcal{X}$ of the system state $x$ is obtained by an observer of the form 
\begin{equation}
    \label{eq:xhat-dot}
    \dot{\hat{x}} = f(\hat{x},u)+ K (y-h(\hat{x})),
\end{equation}
where the observer gain $K \in \mathbb{R}^{n_x \times n_y}$ might be time- and state-dependent. In addition to correcting the state estimate, the innovation $y-h(\hat{x})$ from the observer is used as the residual $r$ in an anomaly detector; in particular, an anomaly is flagged if $\|y-h(\hat{x})\| > \delta $, where $\delta > 0$ is a user-specified threshold that trades off sensitivity and false-detections.

\subsection{Safety and Invariance}
In this paper we are interested in the \textit{safety} of system \eqref{eq:sys}, defined through admissible states and control actions.
\begin{definition}[Safety]
    Given a set of admissible states $\mathcal{X}\subseteq \mathbb{R}^{n_x}$ and a set of admissible controls $\mathcal{U}\subseteq \mathbb{R}^{n_u}$, the system in \eqref{eq:sys} is said to be \textit{safe} if 
    \begin{equation}
        x(t) \in \mathcal{X} \text{ and } u(t) \in \mathcal{U} \text{ for all } t \geq 0.
    \end{equation}
\end{definition}

A common way to ensure safety is to find a \textit{forward invariant} subset of the admissible states. 
To define forward invariance, let $\kappa : \mathcal{X} \to \mathcal{U}$ be a control policy, which results in the closed-loop (autonomous) system 
\begin{equation}
    \label{eq:sys-closed}
    \dot{x} = f(x,\kappa(x)).
\end{equation}

For an autonomous system of the form \eqref{eq:sys-closed} we define forward invariance in the following way.

\begin{definition}[Forward invariance]
    A set $\mathcal{S} \subseteq \mathbb{R}^{n_x}$ is \textit{forward invariant} for the closed-loop system \eqref{eq:sys-closed} if $x(0) \in \mathcal{S}$ implies that $x(t) \in \mathcal{S}$ for all $t\geq0$.   
\end{definition}

Next, we make the connection between forward invariant sets and safety of \eqref{eq:sys} explicit. 
\begin{lemma}[Forward invariance $\rightarrow$ safety]
    The system in \eqref{eq:sys} is safe if there exists a control policy $\kappa : \mathcal{X} \to \mathcal{U}$ and a set $S\subseteq \mathcal{X}$ such that $S$ is forward invariant for the closed loop system \eqref{eq:sys-closed}. 
\end{lemma}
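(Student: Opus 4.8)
The plan is to exhibit an explicit admissible execution of \eqref{eq:sys} by closing the loop with the given policy $\kappa$ and then propagating the two set inclusions $S \subseteq \mathcal{X}$ and $\kappa(\mathcal{X}) \subseteq \mathcal{U}$ forward in time. Concretely, I would apply the control $u(t) = \kappa(x(t))$, so that the state trajectory $x(\cdot)$ of \eqref{eq:sys} coincides with a trajectory of the autonomous closed-loop system \eqref{eq:sys-closed}, to which the notion of forward invariance applies directly.

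First I would invoke forward invariance of $S$: under the standing assumption that the system is initialized inside the invariant set, $x(0) \in S$, the definition of forward invariance immediately gives $x(t) \in S$ for all $t \geq 0$. Next, since $S \subseteq \mathcal{X}$ by hypothesis, this yields $x(t) \in \mathcal{X}$ for all $t \geq 0$, which is the first of the two conditions in the definition of safety. Finally, because $\kappa$ maps $\mathcal{X}$ into $\mathcal{U}$ and we have just shown $x(t) \in \mathcal{X}$, the applied control satisfies $u(t) = \kappa(x(t)) \in \mathcal{U}$ for all $t \geq 0$, giving the second condition. Both conditions of the safety definition then hold, so the closed-loop system is safe.

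There is essentially no analytical difficulty here: once the loop is closed, the argument is a one-line chain of inclusions, and no regularity or existence assumptions on $f$ beyond those already implicit in \eqref{eq:sys-closed} are needed. The one point I would be explicit about — and the only thing resembling an obstacle — is the initialization hypothesis $x(0) \in S$. Without it, forward invariance of $S$ asserts nothing about the trajectory, so either this should be read as an implicit hypothesis of the lemma, or one must additionally require that $\kappa$ steers every $x(0) \in \mathcal{X}$ into $S$ (in finite time, along an admissible trajectory). I would adopt the former reading and flag it in a remark accompanying the proof.
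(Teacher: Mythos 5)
Your proof is correct and follows essentially the same route as the paper: close the loop with $u(t)=\kappa(x(t))$, use forward invariance together with $S\subseteq\mathcal{X}$ to get $x(t)\in\mathcal{X}$, and use the range of $\kappa$ to get $u(t)\in\mathcal{U}$. Your remark that the initialization $x(0)\in S$ is an implicit hypothesis is well taken — the paper's proof uses it silently — and flagging it is the right reading.
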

\begin{proof}
    From the range of $\kappa$ we directly have that $u(t) \in \mathcal{U}$ for all $t\geq 0$ since $u(t) = \kappa(x)$. Moreover, we have from the forward invariance of $S$ that $x(t)\in S$ for all $t\geq 0$, but since $S \subseteq \mathcal{X}$ we also have that $x(t) \in \mathcal{X}$ for all $t\geq0$. 
\end{proof}

We will call invariant sets such that $\mathcal{S} \subseteq \mathcal{X}$ \textit{safe sets}; additionally, we assume that safe sets can be characterized with a continuously differentiable function $h_S: \mathcal{X}\to \mathbb{R}$ as 
\begin{equation}
    \label{eq:Sdef}
    \mathcal{S}=\{x\in \mathbb{R}^{n_x} : h_S(x) \geq 0\}.
\end{equation}

\subsection{Safety filters}
In the previous subsection we established the connection between the safety of \eqref{eq:sys} and the existence of a forward invariant set. To be able to use this in practice, however, a more pragmatic characterization of forward invariance is necessary. 
Most safety filters build on the following classical characterization of forward invariance:
\begin{theorem}[Nagumo's theorem \cite{nagumo1942lage}]
    \label{th:nagumo}
    Let $S \subseteq \mathcal{X}$ be defined as $S\triangleq\{x \in \mathbb{R}^{n_x} | h_S(x) \geq 0\}$, where $h_S$ is continuously differentiable. Moreover, assume that the interior of $S$ is non-empty. Then $S$ is forward invariant for \eqref{eq:sys-closed} if and only if 
\begin{equation}
    \label{eq:nagumo}
    \dot{h}_S(x) = \nabla h_S(x)^T f(x, \kappa(x)) \geq 0
\end{equation}
for any $x\in \partial S$.
\end{theorem}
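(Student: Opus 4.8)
The plan is to prove the two implications separately, resting on one elementary observation: since $h_S$ is continuously differentiable and $x(\cdot)$ denotes a solution of \eqref{eq:sys-closed}, the scalar map $t \mapsto h_S(x(t))$ is differentiable with $\frac{d}{dt} h_S(x(t)) = \nabla h_S(x(t))^T f(x(t),\kappa(x(t)))$ by the chain rule. I would also first record that, because $S = \{h_S \geq 0\}$ is closed and $\{h_S > 0\}$ is open and contained in $S$, every boundary point $x \in \partial S$ satisfies $h_S(x) = 0$.

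For necessity, assume $S$ is forward invariant and pick any $x_0 \in \partial S$, so $h_S(x_0) = 0$. Let $x(\cdot)$ solve \eqref{eq:sys-closed} with $x(0) = x_0$; forward invariance gives $h_S(x(t)) \geq 0 = h_S(x(0))$ for all $t \geq 0$. Dividing by $t > 0$ and letting $t \to 0^+$ shows the right derivative of $t \mapsto h_S(x(t))$ at $0$ is nonnegative, and since that derivative equals $\nabla h_S(x_0)^T f(x_0,\kappa(x_0))$ we obtain $\dot h_S(x_0) \geq 0$. As $x_0 \in \partial S$ was arbitrary, \eqref{eq:nagumo} holds.

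For sufficiency, assume \eqref{eq:nagumo} and argue by contradiction: suppose some solution with $x(0) \in S$ leaves $S$, i.e. $h_S(x(T)) < 0$ for some $T > 0$. Let $t_1 \triangleq \sup\{t \in [0,T] : h_S(x(t)) \geq 0\}$; this set is closed and contains $0$, so the supremum is attained, $t_1 < T$, and by continuity $h_S(x(t_1)) = 0$ while $h_S(x(t)) < 0$ for $t \in (t_1,T]$. Hence $x(t_1) \in \partial S$ and the right derivative of $t \mapsto h_S(x(t))$ at $t_1$ is $\leq 0$, which together with \eqref{eq:nagumo} forces $\dot h_S(x(t_1)) = 0$.

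This last equality is exactly where the argument becomes delicate, and I expect it to be the main obstacle: a trajectory merely \emph{tangent} to $\partial S$ is not ruled out by the non-strict inequality \eqref{eq:nagumo}. I would resolve it with the classical perturbation argument underlying Nagumo's theorem: replace $f(\cdot,\kappa(\cdot))$ by a family of continuous fields $g_\epsilon$ converging to it uniformly on the compact set swept by the trajectory and satisfying the \emph{strict} inward condition $\nabla h_S(x)^T g_\epsilon(x) > 0$ on $\partial S$; for each strict system the last-crossing argument above yields a genuine contradiction, so $S$ is invariant for $\dot z = g_\epsilon(z)$, and passing $\epsilon \to 0$ via continuous dependence of solutions on the vector field together with closedness of $S$ gives $x(t) \in S$ for all $t \in [0,T]$, contradicting $h_S(x(T)) < 0$. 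The construction of $g_\epsilon$ is where any extra regularity (e.g. $\nabla h_S$ nonvanishing on $\partial S$) enters; for the statement as given I would ultimately defer to Nagumo's original proof \cite{nagumo1942lage}.
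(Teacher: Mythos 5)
The paper never proves this statement: Theorem~\ref{th:nagumo} is quoted as a classical result and attributed to Nagumo's 1942 paper, so there is no internal proof to compare yours against and your attempt must stand on its own. Your necessity half does stand: observing that $\partial S \subseteq \{h_S = 0\}$ (since $\{h_S>0\}$ is open and contained in $S$) and bounding the one-sided difference quotient of $t \mapsto h_S(x(t))$ at a boundary point using forward invariance is a complete and correct argument, modulo the standing assumption that solutions of \eqref{eq:sys-closed} exist from boundary initial conditions.

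The sufficiency half, however, contains a genuine gap, and it is exactly the one you flag yourself. The last-crossing argument only yields $\dot h_S(x(t_1)) = 0$ at a tangency, and the perturbation you propose --- a family $g_\epsilon$ with $\nabla h_S(x)^T g_\epsilon(x) > 0$ on $\partial S$, e.g.\ $g_\epsilon = f(\cdot,\kappa(\cdot)) + \epsilon \nabla h_S$ --- can only be constructed when $\nabla h_S$ does not vanish on $\partial S$, an assumption the statement does not make (the paper adds it only in the CBF theorem that follows). Without it the claimed equivalence is in fact false as printed: in $\mathbb{R}$ take $h_S(x) = x^3$, so $S = \{x \geq 0\}$ has non-empty interior with $\partial S = \{0\}$, and let the closed loop be $\dot x = -1$; then $\nabla h_S(0)^T f(0,\kappa(0)) = 0 \geq 0$ at the only boundary point, yet the solution from $0$ leaves $S$ immediately. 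Hence ``defer to Nagumo's original proof'' cannot close the argument for the statement as given --- the classical theorem is phrased with a tangent-cone (subtangentiality) condition, and the gradient form \eqref{eq:nagumo} recovers it only under a constraint qualification such as $\nabla h_S(x) \neq 0$ on $\partial S$. A second, smaller soft spot is your limiting step $\epsilon \to 0$: continuous dependence transfers invariance to the nominal system only when its solutions are unique (e.g.\ a locally Lipschitz closed loop); with merely continuous dynamics one obtains viability (some solution remains in $S$), not forward invariance of every solution, which is what the paper's definition requires.
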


Intuitively, Theorem \ref{th:nagumo} states that forward invariance of $S$ is equivalent to the system state changing towards the interior of $S$ when it is at the boundary $\partial S$.   
The condition \eqref{eq:nagumo} is, however, not practical since it is only enforced on the boundary $\partial S$, which has measure zero. One motivation behind introducing \textit{control-barrier functions} (CBFs) is to obtain a Nagumo-like condition that holds on the entire $S$. In particular, this is done through adding a strengthening term in the form of an extended class $\mathcal{K}$ function to the right-hand side of \eqref{eq:nagumo}. 
\begin{definition}[Control Barrier Function]
    The function $h_S$ is a control-barrier function for \eqref{eq:sys} if there exists $\alpha \in \mathcal{K}^e$ such that  
    \begin{equation}
        \sup_{u\in \mathcal{U}} \nabla h_S(x)^T f(x, u) \geq -\alpha(h_S(x)).
    \end{equation}
    for any $x \in S$.
\end{definition}

A key relationship between CBFs and forward invariance is stated in the following theorem.  
\begin{theorem}[CBF $\rightarrow$ forward invariance \cite{ames2017cbf}]
    Let $S \subseteq \mathcal{X}$ be defined as $S\triangleq\{x \in \mathbb{R}^{n_x} | h_S(x) \geq 0\}$, where $h_S$ is a continuously differentiable CBF. Moreover, assume that $\nabla h_S(x) \neq 0$ for any $x\in\partial S$. Then any Lipschitz continuous control policy $\kappa: \mathcal{X} \to \mathcal{U}$ such that $\kappa(x)\in \tilde{\mathcal{U}}_S(x)$ makes $S$ forward invariant for the closed-loop system \eqref{eq:sys-closed}.  
\end{theorem}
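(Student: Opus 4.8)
The plan is to reduce forward invariance to a scalar differential inequality along closed-loop trajectories and then invoke the comparison lemma. Fix an initial condition $x(0) \in S$, i.e.\ $h_S(x(0)) \geq 0$. Since $\kappa$ is Lipschitz continuous (and $f$ is assumed regular enough that $x \mapsto f(x,\kappa(x))$ is locally Lipschitz), the closed-loop system \eqref{eq:sys-closed} has a unique solution $x(\cdot)$ on a maximal interval $[0,T_{\max})$. Define the scalar signal $\eta(t) \triangleq h_S(x(t))$; since $h_S$ is continuously differentiable and $x(\cdot)$ is absolutely continuous, $\eta$ is absolutely continuous with $\dot\eta(t) = \nabla h_S(x(t))^T f(x(t),\kappa(x(t)))$ for almost every $t$.

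Next I would use the hypothesis $\kappa(x) \in \tilde{\mathcal{U}}_S(x) \triangleq \{u \in \mathcal{U} : \nabla h_S(x)^T f(x,u) \geq -\alpha(h_S(x))\}$ to obtain the differential inequality $\dot\eta(t) \geq -\alpha(\eta(t))$ with $\eta(0) \geq 0$. Let $\beta(\cdot)$ solve the scalar problem $\dot\beta = -\alpha(\beta)$, $\beta(0) = \eta(0)$. By the comparison lemma, $\eta(t) \geq \beta(t)$ on the common interval of existence, so it suffices to show $\beta(t) \geq 0$ for all $t \geq 0$. If $\eta(0) = 0$, then $\beta \equiv 0$ since $\alpha(0) = 0$; if $\eta(0) > 0$, then $\beta$ is non-increasing while it stays nonnegative (because $\alpha(\beta) \geq 0$ there) and cannot reach $0$ in finite time, again because $\alpha(0) = 0$. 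Hence $h_S(x(t)) = \eta(t) \geq \beta(t) \geq 0$, so $x(t) \in S$ for all $t \in [0,T_{\max})$; since trajectories are thereby confined to $S \subseteq \mathcal{X}$, a standard continuation argument extends this to all $t \geq 0$, which is precisely forward invariance of $S$.

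The step I expect to require the most care is justifying the comparison lemma together with the non-negativity of $\beta$ under the stated regularity of $\alpha$: the definition only asks that $\alpha$ be continuous and strictly increasing with $\alpha(0)=0$, whereas the cleanest version of the comparison lemma and uniqueness of $\beta$ use local Lipschitzness of $\alpha$ (as assumed in \cite{ames2017cbf}); keeping only continuity forces one either to strengthen the assumption or to approximate $\alpha$ from below by Lipschitz extended class-$\mathcal{K}$ functions and pass to the limit. A secondary point is well-posedness of the closed loop, namely needing $f(\cdot,\kappa(\cdot))$ locally Lipschitz for uniqueness and ruling out finite-time escape before concluding invariance for all $t \geq 0$. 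Finally, the assumption $\nabla h_S(x) \neq 0$ on $\partial S$ is what ensures the boundary $\{x : h_S(x)=0\}$ is non-degenerate and that $S$ has non-empty interior near $\partial S$, so that the CBF inequality genuinely governs trajectories as they approach the boundary.
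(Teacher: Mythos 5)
The paper does not prove this theorem at all: it is imported verbatim from \cite{ames2017cbf}, so there is no in-paper argument to compare against. Your comparison-lemma proof is essentially the standard proof from that reference, and it is sound modulo the technicalities you yourself flag: existence/uniqueness of the closed-loop solution (local Lipschitzness of $f(\cdot,\kappa(\cdot))$) and a version of the comparison lemma valid for merely continuous $\alpha$ (e.g.\ comparing against the \emph{minimal} solution of $\dot\beta=-\alpha(\beta)$). One small inaccuracy: your claim that $\beta$ ``cannot reach $0$ in finite time because $\alpha(0)=0$'' is false without Lipschitzness (take $\alpha(s)=\sqrt{s}$ for $s\ge 0$, which drives $\beta$ to zero in finite time); what saves the argument is that reaching zero is harmless --- for $\beta<0$ one has $-\alpha(\beta)>0$, so no solution of the comparison ODE started at $\beta(0)\ge 0$ can ever become negative, and nonnegativity of $\beta$ (hence of $h_S(x(t))$) is all you need. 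It is also worth noting that your proof never uses the hypothesis $\nabla h_S(x)\neq 0$ on $\partial S$; that assumption matters for other routes and related results, not for the comparison argument. In particular, within this paper's own toolkit there is a shorter route that does use it: since $\kappa(x)\in\tilde{\mathcal{U}}_S(x)$ everywhere, at any $x\in\partial S$ one has $h_S(x)=0$ and hence $\nabla h_S(x)^T f(x,\kappa(x))\ge -\alpha(0)=0$, which is exactly the condition of Nagumo's theorem (Theorem \ref{th:nagumo}); the nonvanishing gradient and Lipschitz $\kappa$ then supply the regularity (non-degenerate boundary, well-posed closed loop) under which Nagumo applies. Your approach buys a quantitative lower bound $h_S(x(t))\ge\beta(t)$ valid in the interior, whereas the Nagumo route is more elementary given the paper's preliminaries but only certifies invariance.
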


We introduce notation for the corresponding set of control actions $\tilde{\mathcal{U}}_S(x)$ that ensure forward invariance of $S$ given the state $x$; that is, 
\begin{equation}
    \label{eq:cbf-U}
    \tilde{\mathcal{U}}_S(x) \triangleq \{u\in \mathcal{U} : \nabla h_S(x)^T f(x, u) \geq -\alpha(h_S(x)) \}.
\end{equation}

A safety filter that is based on CBFs is the \textit{Active Set Invariance Filter} (ASIF) \cite{ames2017cbf} that filters control actions according to the policy 
\begin{equation}
    \label{eq:qp-cbf}
    \begin{aligned}
        \kappa_{\text{ASIF}}(x,u_{\text{des}}) = &\argmin_{u\in \mathcal{U}} \|u-u_{\text{des}}\|^2 \\
                                   &\nabla h_S(x)^T f(x, u) \geq - \alpha(h_S(x)), 
    \end{aligned}
\end{equation}
where $u_{\text{des}}\in \mathbb{R}^{n_u}$ is a desired control action. If the set of admissible control actions $\mathcal{U}$ is a polyhedron and the dynamics $f$ is control affine, the resulting optimization problem in \eqref{eq:qp-cbf} is a quadratic program (QP). The ASIF is said to be \textit{minimally invasive}, since it finds a safe control action that is as close as possible to the desired control action $u_{\text{des}}$. In the rest of the paper, we will consider how safety filters of the form \eqref{eq:qp-cbf} can be ``deactivated''.

\subsection{Formulation of adversarial problem}
The overarching objective of the adversary is to make the system unsafe; that is, to get $x \notin \mathcal{X}$. Since we assume that control actions applied to the system are filtered through a safety filter of the form \eqref{eq:qp-cbf}, which ensures that $x \in \mathcal{X}$, an attacker \textit{cannot} make the system unsafe as long as the state $x$ is estimated correctly. To this end, the adversary need to alter the measurement $y$ to bias the state estimate $\hat{x}$ such that the safety filter becomes ``deactivated''. We define \textit{deactivation} of a CBF (and the corresponding ASIF given by that CBF) in the following way.
\begin{definition}[Deactivation]
    The CBF is said to be \textit{deactivated} by the state estimate $\hat{x}$ at time $t$ if $\exists u \in \tilde{\mathcal{U}}(\hat{x}(t))$ such that $u \notin \tilde{\mathcal{U}}(x(t))$.  
\end{definition}

In other words, a state estimate $\hat{x}$ deactivates a safety filter if there exist a control action that is deemed safe for $\hat{x}$ but unsafe for the true state $x$. When a safety filter is deactivated, there might exist a control signal that breaks the forward invariance of $S$ and, consequently, makes the system \eqref{eq:sys} unsafe. A delimitation we make in this paper is, however, that we are only interested in deactivating safety filters, not producing unsafe control actions, as stated in the following remark. 

\begin{remark}
    In this paper our focus is \textit{not} on producing a control signal $u(\cdot)$ that makes the system unsafe, but rather to deactivate the safety filter such that unsafe control actions \textit{could} be passed to the plant. Deactivation of the safety filter is a necessary first step in a two-pronged attack, where the second step would be to modify $u_{\text{des}}$ to make $x \notin \mathcal{X}$.  
\end{remark}

To deactivate the safety filter, we assume that the adversary has access to the following information: 

\begin{assumption}
    \label{as:mod-y}
    The adversary can freely modify $y \leftarrow y^a$. 
\end{assumption}
\begin{assumption}
    \label{as:observ-info}
    The adversary knows the observer gain $K$, the measurement function $h$, and the detector threshold $\delta$.   
\end{assumption}
\begin{assumption}
    \label{as:set-info}
    The adversary can evaluate the function $h_S$ at the current state estimate $\hat{x}$.
\end{assumption}
\begin{assumption}
    \label{as:xhat}
    The adversary knows the current state estimate $\hat{x}$.
\end{assumption}

For satisfying Assumption \ref{as:xhat}, the adversary can either have direct access to the output of the observer, or run its own identical filter in parallel. In the latter case the adversary would need to know the dynamics of the system (i.e., $f$) and the initial state, in addition to $K$ and $h$ from Assumption \ref{as:observ-info}.

In summary, the problem that the adversary is interested in solving is formalized as follows:

\begin{problem}[Problem formulation for adversary]
    \label{prob:adv}
    Under Assumptions \ref{as:mod-y}--\ref{as:xhat}, inject false measurements $y^a$ such that $\hat{x}$ deactivates the ASIF defined in \eqref{eq:qp-cbf}, without being detected by the anomaly detector. 
\end{problem}

In the rest of the paper we consider a specific heuristic attack policy to tackle Problem \ref{prob:adv}.  

\section{Stealthy deactivation of safety filter}
The main idea behind the proposed attack policy is to inject false measurements $y^a$ to bias the state estimate $\hat{x}$ such that the safety filter gets a false sense of safety. To this end, the adversary injects measurements $y^a$ to increase the value of $h_S(\hat{x})$ (recall the $h_S$ defines the safe set $S$ in \eqref{eq:Sdef}), which can be interpreted as increasing the perceived safety margin to the boundary of $S$. As a result, since the safety margin is perceived to be larger than it actually is, the safety filter might erroneously accept dangerous control actions being applied to the plant. 

To be more concrete, by increasing $h_S(\hat{x})$ the set $\tilde{\mathcal{U}}_S(\hat{x})$ of control actions that are perceived to be safe becomes larger since the right-hand-side of the inequality constraint in \eqref{eq:cbf-U} becomes smaller. This incites deactivation of the safety filter.

For the adversary to remain stealthy, the injected measurement $y^a$ needs to be kept close to the expected measurement $h(\hat{x})$ to remain undetected by the anomaly detector; specifically, $\|y^a - h(\hat{x})\|\leq \delta$.

\subsection{Formalization of the attack}
We formalize increasing the safety margin $h_S$ subject to a stealth constraint as finding $y$ that solves  
\begin{equation}
    \begin{aligned}
        \label{eq:grad-attack-nom}
        y^a(\hat{x}) = \argmax_y \:\: & \dot{h}_S(\hat{x}) \\
        \text{subject to } &\|y-h(\hat{x})\| \leq \delta \\
                           & \text{and the dynamics in \eqref{eq:xhat-dot}}.
        \end{aligned}
\end{equation}
The following theorem makes the adversary's attack policy based on \eqref{eq:grad-attack-nom} more explicit.
\begin{theorem}
    Solving \eqref{eq:grad-attack-nom} is equivalent to solving  

\begin{equation}
    \label{eq:grad-attack}
    \begin{aligned}
    y^a(\hat{x}) = \argmax_y\:\: &\nabla h_S(\hat{x})^T K y \\
        \text{subject to } &\|y-h(\hat{x})\| \leq \delta.
    \end{aligned}
\end{equation}
\end{theorem}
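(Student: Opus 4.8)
The plan is to reduce the objective $\dot h_S(\hat x)$ to an affine function of the decision variable $y$ by substituting the observer dynamics, and then to observe that the two problems share an identical feasible set. First I would apply the chain rule, which is legitimate since $h_S$ is continuously differentiable, to write $\dot h_S(\hat x) = \nabla h_S(\hat x)^T \dot{\hat x}$. Substituting the observer equation \eqref{eq:xhat-dot} then yields
\[
\dot h_S(\hat x) = \nabla h_S(\hat x)^T f(\hat x, u) + \nabla h_S(\hat x)^T K y - \nabla h_S(\hat x)^T K h(\hat x).
\]

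Next I would argue that, at the instant the attack is computed, the state estimate $\hat x$ is a fixed datum (available to the adversary by Assumption \ref{as:xhat}), and the applied control $u = \kappa_{\text{ASIF}}(\hat x, u_{\text{des}})$ depends on the injected measurement only through $\hat x$; hence the terms $\nabla h_S(\hat x)^T f(\hat x, u)$ and $-\nabla h_S(\hat x)^T K h(\hat x)$ are constants with respect to the optimization variable $y$. Consequently the objective of \eqref{eq:grad-attack-nom} equals $\nabla h_S(\hat x)^T K y$ plus a quantity that does not depend on $y$.

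Finally, since adding a $y$-independent constant to an objective leaves its set of maximizers unchanged, and since the feasible set $\{y : \|y - h(\hat x)\| \le \delta\}$ is literally the same in \eqref{eq:grad-attack-nom} and \eqref{eq:grad-attack}, the two problems have the same optimal solutions, which establishes the claimed equivalence.

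The only real subtlety — more a matter of bookkeeping than a genuine obstacle — is making explicit what is held fixed: one must state clearly that $\hat x$, and therefore $\nabla h_S(\hat x)$, $h(\hat x)$, and the control $u$ fed into $f$, are frozen when solving the instantaneous problem over $y$, so that "the dynamics in \eqref{eq:xhat-dot}" enters the objective only through the single linear term $\nabla h_S(\hat x)^T K y$. With this reading the reduction is immediate and holds pointwise in time.
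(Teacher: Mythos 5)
Your argument is correct and is essentially the paper's own proof: both expand $\dot h_S(\hat x) = \nabla h_S(\hat x)^T\bigl(f(\hat x,u)+K(y-h(\hat x))\bigr)$ via the observer dynamics and note that the only $y$-dependent term is $\nabla h_S(\hat x)^T K y$, so the maximizer over the unchanged feasible set is the same. Your extra remark about $\hat x$ and $u$ being frozen at the instant of the attack is a reasonable clarification but not a departure from the paper's reasoning.
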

\begin{proof}
    Expanding $\dot{h}_S(\hat{x})$ and using \eqref{eq:xhat-dot} yields
    \begin{equation}
        \begin{aligned}
            \dot{h}_S(\hat{x}) &= \nabla h_S(\hat{x})^T \dot{\hat{x}} \\
                               &= \nabla h_S(\hat{x})^T \left(f(\hat{x},u) + K(y-h(\hat{x}))\right).
        \end{aligned}
    \end{equation}
    Since the only term that contains the decision variable $y$ is $\nabla h_S(\hat{x})^T K y$, we can use that term instead of $\dot{h}_S(\hat{x})$ in the objective function and still get the same optimizer. 
\end{proof}

An interpretation of the objective in \eqref{eq:grad-attack} is that $Ky$ is the change to the state due to the measurement $y$. Hence, $\nabla h_S(x)^T K y$ is a measure of how much the change in $x$ aligns with the gradient of the safety margin $h_S$.  

The attack policy obtained through the optimization problem in \eqref{eq:grad-attack} does, in fact, take a closed form when the stealth constraint is expressed in terms of the $2$- or $\infty$-norm.  
\begin{corollary}[Closed-form attack policy for 2-norm]
    \label{cor:2norm}
    If the stealth constraint in \eqref{eq:grad-attack} is posed in terms of the $2$-norm and $\|K^T \nabla h_S(\hat{x})\| \neq 0$, the solution $y^a(\hat{x})$ takes the closed form 
\begin{equation}
    \label{eq:closedform-attack}
    y^a(\hat{x}) = h(\hat{x})+\delta \frac{K^T \nabla h_S(\hat{x})}{\|K^T \nabla h_S(\hat{x})\|_2}.
\end{equation}
\end{corollary}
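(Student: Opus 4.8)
The plan is to recognize \eqref{eq:grad-attack} (specialized to the $2$-norm) as the maximization of a linear functional over a Euclidean ball, for which the optimizer is given by Cauchy--Schwarz. Concretely, I would first perform the change of variables $z \triangleq y - h(\hat{x})$, which turns the feasible set into the centered ball $\{z : \|z\|_2 \le \delta\}$ and the objective into $\nabla h_S(\hat{x})^T K z$. The key observation is that $\nabla h_S(\hat{x})^T K z = \langle K^T \nabla h_S(\hat{x}),\, z\rangle$, so the problem is exactly $\max_{\|z\|_2 \le \delta} \langle \gradxhat,\, z\rangle$ with $\gradxhat = K^T \nabla h_S(\hat{x})$ (using the macro already defined in the preamble).

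Next I would invoke the Cauchy--Schwarz inequality: for every feasible $z$ we have $\langle \gradxhat, z\rangle \le \|\gradxhat\|_2 \,\|z\|_2 \le \delta \|\gradxhat\|_2$, and equality in the first step holds precisely when $z$ is a nonnegative multiple of $\gradxhat$, while equality in the second step forces $\|z\|_2 = \delta$. Under the hypothesis $\|\gradxhat\|_2 \neq 0$, these two conditions pin down a unique maximizer $z^\star = \delta \,\gradxhat / \|\gradxhat\|_2$, attaining the value $\delta\|\gradxhat\|_2$. Undoing the substitution, $y^a(\hat{x}) = h(\hat{x}) + z^\star$, which is exactly \eqref{eq:closedform-attack}.

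I would also remark briefly why the nonzero assumption is needed: if $\gradxhat = 0$, the objective is identically zero on the feasible set, so every point of the ball is optimal and no canonical closed form exists (the measurement bias is then irrelevant to first order). The only mild subtlety to flag is that $\argmax$ denotes the (here unique) maximizer, so it is worth stating the uniqueness explicitly rather than just exhibiting a maximizer; but given the strict convexity direction supplied by Cauchy--Schwarz equality conditions, this is immediate and not a real obstacle. Overall the proof is a one-line application of Cauchy--Schwarz after the translation $z = y - h(\hat{x})$, with the only thing to be careful about being the handling of the degenerate case and the uniqueness claim.
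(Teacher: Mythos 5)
Your proof is correct, but it takes a different route from the paper. The paper proves Corollary \ref{cor:2norm} by writing the constraint in the differentiable form $\|y-h(\hat{x})\|_2^2 \leq \delta^2$, setting up the KKT conditions, arguing from stationarity that the multiplier $\lambda$ is nonzero when $K^T\nabla h_S(\hat{x}) \neq 0$ (note that stationarity there carries a sign so that the maximizer points along $+K^T\nabla h_S(\hat{x})$), and then using complementarity to conclude $\|y^*-h(\hat{x})\|_2=\delta$ and solve for $\lambda$. Your argument instead translates by $z = y - h(\hat{x})$ and applies Cauchy--Schwarz to $\max_{\|z\|_2\leq\delta}\langle K^T\nabla h_S(\hat{x}), z\rangle$, which is essentially the dual-norm characterization specialized to the Euclidean norm. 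What your route buys: it is more elementary (no Lagrangian machinery), it delivers uniqueness of the maximizer for free from the equality conditions of Cauchy--Schwarz (the paper's proof exhibits the KKT point without commenting on uniqueness or on why KKT is sufficient, which here follows from the linear objective and convex feasible set), and it connects naturally to the dual-norm argument the paper later uses in the proof of Theorem \ref{th:dual-norm}. What the paper's route buys: the KKT treatment is structurally parallel to the proof of Corollary \ref{cor:infnorm} for the $\infty$-norm, so the two closed-form policies are derived by one uniform technique. Your remark about the degenerate case $K^T\nabla h_S(\hat{x})=0$ (every feasible point optimal, no canonical closed form) is accurate and matches why the corollary assumes $\|K^T\nabla h_S(\hat{x})\|\neq 0$.
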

\begin{proof}
    A proof is provided in Appendix \ref{ap:pf-2norm}.
\end{proof}

\begin{corollary}[Closed-form attack policy for $\infty$-norm]
    \label{cor:infnorm}
    If the stealth constraint in \eqref{eq:grad-attack} is posed in terms of the $\infty$-norm, the solution $y^a(\hat{x})$ takes the closed form 
\begin{equation}
    \label{eq:closedform-attack}
    y^a(\hat{x}) = h(\hat{x})+\delta\sgn(K^T \nabla h_S(\hat{x})),
\end{equation}
where $\sgn$ is evaluated element-wise.
\end{corollary}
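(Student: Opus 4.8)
The plan is to reduce the linear program in \eqref{eq:grad-attack} to a coordinate-wise separable problem and then invoke the elementary formula for maximizing a linear functional over a box. First I would introduce the shorthand $c \triangleq \gradxhat$, so that, using $\nabla h_S(\hat{x})^T K y = c^T y$, the problem reads $\max_y c^T y$ subject to $\|y - h(\hat{x})\|_\infty \le \delta$. Substituting $z = y - h(\hat{x})$ turns the feasible set into the box $\{z : \|z\|_\infty \le \delta\}$ and the objective into $c^T z + c^T h(\hat{x})$; the second term is constant in $z$ and hence irrelevant for the $\argmax$, so it remains to solve $\max_{\|z\|_\infty \le \delta} c^T z$.

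Next I would exploit that both the objective $c^T z = \sum_i [c]_i [z]_i$ and the constraint ($\|z\|_\infty \le \delta \iff |[z]_i| \le \delta$ for all $i$) decouple across coordinates. The maximization therefore splits into $n_y$ independent scalar problems $\max_{|[z]_i| \le \delta} [c]_i [z]_i$, each solved by $[z]_i = \delta\,\sgn([c]_i)$ with optimal value $\delta |[c]_i|$. Stacking the coordinates gives $z = \delta\,\sgn(c)$ and hence $y^a(\hat{x}) = h(\hat{x}) + \delta\,\sgn(\gradxhat)$, which is exactly the claimed closed form.

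The only point requiring care — and the reason no nondegeneracy hypothesis appears here, in contrast to Corollary \ref{cor:2norm} — is the behaviour when a component $[c]_i$ vanishes: then every $[z]_i \in [-\delta,\delta]$ is optimal for that coordinate, so the maximizer is not unique, and the convention $\sgn(0)=0$ simply selects one particular optimal solution. Even when $\gradxhat = 0$ the stated $y^a$ is thus a valid (if non-unique) optimizer. I do not anticipate any real obstacle; the argument is just the standard support-function computation for the $\ell_\infty$-ball, and the write-up should be a few lines.
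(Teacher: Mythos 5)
Your proof is correct, but it takes a different route from the paper's. The paper mirrors the structure of its 2-norm proof: it splits the $\infty$-norm constraint into the linear inequalities $y-h(\hat{x})\leq\delta$ and $-y+h(\hat{x})\leq\delta$, writes the KKT conditions with multipliers $\lambda^+,\lambda^-$, and argues from stationarity plus complementarity that the upper (resp.\ lower) bound must be active in each coordinate where $[\gradxhat]_i>0$ (resp.\ $<0$). You instead shift variables to $z=y-h(\hat{x})$ and exploit that both the objective and the box constraint decouple across coordinates, solving $n_y$ trivial scalar problems --- essentially computing the support function of the $\ell_\infty$-ball. Your argument is more elementary (no optimality conditions needed, the LP being linear over a box makes KKT machinery overkill) and, unlike the paper's proof, it explicitly addresses the degenerate case $[\gradxhat]_i=0$, where the paper's case split ($>0$ or $<0$) is silent and the stated formula with $\sgn(0)=0$ merely selects one of infinitely many optimizers; making that observation explicit is a genuine, if minor, improvement in rigor. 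The paper's KKT route buys uniformity of presentation with Corollary~\ref{cor:2norm}, where duality-style reasoning is genuinely needed, but for this corollary your separable argument is the shorter and cleaner one.
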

\begin{proof}
    A proof is provided in Appendix \ref{ap:pf-infnorm}.
\end{proof}

Both Corollary \ref{cor:2norm} and \ref{cor:infnorm} make it straightforward for the adversary to select a false measurement $y^a$. No matter the norm, however, one can show that the resulting false-data injection results in a positive bias for $\dot{h}$, this is the main motivation behind the proposed attack.
\begin{theorem}
    \label{th:dual-norm}
    Let $\|\cdot\|_*$ denote the dual norm of $\|\cdot\|$, defined as $\|z\|_*=\max_{\|x\|\leq 1} z^T x$ (see, e.g., \cite[$\S$A.1.6]{boyd2004convex} for details.) If the attack policy $y=y^a(\hat{x})$ is used, the change in the safety margin $h_S$ is given by 
    \begin{equation}
        \dot{h}_S(\hat{x})\big|_{y=y^a(\hat{x})} = \nabla h_S(\hat{x})^T f(\hat{x},u) + \delta \|K^T \nabla h_S(\hat{x})\|_*. 
        %\nabla h_S(\hat{x})^T K\left(y^a(\hat{x})-h(\hat{x})\right) = \delta \|K^T \nabla h_S(\hat{x})\|_* 
    \end{equation}
\end{theorem}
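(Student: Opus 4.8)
The plan is to substitute the attack policy directly into the expansion of $\dot{h}_S(\hat{x})$ already obtained in the earlier derivation of \eqref{eq:grad-attack}, and then evaluate the single remaining $y$-dependent term using optimality of $y^a(\hat{x})$ together with the definition of the dual norm.

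First I would recall that $\dot{h}_S(\hat{x}) = \nabla h_S(\hat{x})^T f(\hat{x},u) + \nabla h_S(\hat{x})^T K\big(y - h(\hat{x})\big)$, so that evaluating at $y = y^a(\hat{x})$ leaves exactly the term $\nabla h_S(\hat{x})^T K\big(y^a(\hat{x}) - h(\hat{x})\big)$ to be determined. Writing $z \triangleq K^T \nabla h_S(\hat{x})$ and noting that $\nabla h_S(\hat{x})^T K y = z^T y$, the attack problem \eqref{eq:grad-attack} reads $\max_y z^T y$ subject to $\|y - h(\hat{x})\| \leq \delta$.

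Next I would change variables to $w \triangleq y - h(\hat{x})$, so the problem becomes $\max_{\|w\| \leq \delta} z^T w$ (the constant $z^T h(\hat{x})$ shifts the optimal value but not the optimal $y$, and it is precisely the reduced quantity $z^T w^\star$ that we want), and then rescale $w = \delta\tilde w$ to get $\delta \max_{\|\tilde w\| \leq 1} z^T \tilde w = \delta\|z\|_*$ by the definition of the dual norm. Hence at the optimum $z^T w^\star = \delta\|z\|_* = \delta \|K^T \nabla h_S(\hat{x})\|_*$, i.e.\ $\nabla h_S(\hat{x})^T K\big(y^a(\hat{x}) - h(\hat{x})\big) = \delta \|K^T \nabla h_S(\hat{x})\|_*$. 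Substituting this back into the expansion of $\dot{h}_S(\hat{x})$ yields the claimed identity.

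There is essentially no serious obstacle here; the only points requiring a little care are the bookkeeping of the constant term $z^T h(\hat{x})$ under the change of variables (it alters the optimal value but not the optimizer, while the quantity we need is the reduced term, which is unaffected) and the positive homogeneity of norms used to pull the factor $\delta$ out of the supremum. As an alternative one could simply verify the identity from the closed forms in Corollaries \ref{cor:2norm} and \ref{cor:infnorm} in the $2$- and $\infty$-norm cases --- e.g.\ $\nabla h_S(\hat{x})^T K\,\delta K^T \nabla h_S(\hat{x})/\|K^T \nabla h_S(\hat{x})\|_2 = \delta\|K^T \nabla h_S(\hat{x})\|_2$ --- but the dual-norm argument has the advantage of covering a general stealth norm in one stroke.
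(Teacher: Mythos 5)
Your proposal is correct and follows essentially the same route as the paper's proof: expand $\dot{h}_S(\hat{x})$ via the observer dynamics, isolate the term $\nabla h_S(\hat{x})^T K\bigl(y^a(\hat{x})-h(\hat{x})\bigr)$, and evaluate it through a change of variables and the definition of the dual norm. Your direct substitution $w = y - h(\hat{x})$ even handles the constant-term bookkeeping a bit more cleanly than the paper's version, but there is no substantive difference in the argument.
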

\begin{proof}
    A proof is provided in Appendix \ref{ap:pf-dualnorm}.
\end{proof}

From Theorem \ref{th:dual-norm}, we see that the attack policy result in a nonnegative bias $\delta\|K^T\nabla h_S(\hat{x})\|$ to $\dot{h}_S(\hat{x})$, which is accordance with the objective of the adversary to make the safety margin $h_S(\hat{x})$ larger (which in turn makes the set $\tilde{\mathcal{U}}_S(\hat{x})$ larger.)

An approach for the attacker to remain as covert as possible, and performing attacks with high impact, is to only perform an attack if the system is operating close to the edge of the safe set. For example, an attack can be initiated if the attack condition $h_S(x) < \gamma$ is satisfied for some user-specified threshold $\gamma$.

We summarize the proposed false-data injection attack in Algorithm~\ref{alg:attack}.
\begin{algorithm}[H]
    \caption{False-data injection attack to deactivate the safety filter in \eqref{eq:qp-cbf}.}
  \label{alg:attack}
  \begin{algorithmic}[1]
      \Require $K$, $h(\cdot)$, $h_S(\cdot)$ 
      \While{attack condition is satisfied}
      \State get $\hat{x}$ (either directly access $\hat{x}$ or run parallel filter) 
      \State $y^a \leftarrow$ formulate and solve \eqref{eq:grad-attack} 
      \State inject $y^a$ to the system
      \EndWhile
  \end{algorithmic}
\end{algorithm}

\subsection{Detections of attack}
The anomaly detector that the attack policy in \eqref{eq:grad-attack} circumvents considers only the \textit{magnitude} of the of the residual $y-h(\hat{x})$. However, the attack policy builds on producing measurements that are biased towards the interior of the safe set $S$. As a result, the direction of the residuals will ``unnaturally'' align with $\nabla h_S$.  A possible counter-measure to the attack is, hence, to detect if the residual aligns with $\nabla h_S$. A metric that measures such an alignment is  
\begin{equation}
    \label{eq:rho}
    \rho(y,\hat{x}) \triangleq  \frac{\nabla h_S(\hat{x})^T K (y-h(\hat{x}))}{\delta \|K^T \nabla h_S(\hat{x})\|}.
\end{equation}
The denominator of \eqref{eq:rho} is a normalization factor to ensure that ${|\rho(y,\hat{x})|\leq 1}$. A large value of $\rho$ signifies that the change in the state $x$ due to the residual $(K(y-h(\hat{x})))$ aligns with $\nabla h_S$, i.e., is biased towards the interior of $S$.  
\begin{corollary}
    \label{cor:rho}
    For the attack policy $y= y^a(\hat{x})$, the correlation measure $\rho$ in \eqref{eq:rho} is 
    \begin{equation}
        \rho(y^a(\hat{x}),\hat{x}) = \frac{\|K^T \nabla h_S(\hat{x})\|_*}{\|K^T \nabla h_S(\hat{x})\|}.
    \end{equation}
\end{corollary}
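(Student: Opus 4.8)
The plan is to substitute the attack policy $y = y^a(\hat{x})$ directly into the definition \eqref{eq:rho} of $\rho$ and simplify. Writing $z \triangleq K^T \nabla h_S(\hat{x})$ for brevity, the numerator of $\rho(y^a(\hat{x}),\hat{x})$ is the inner product $z^T\big(y^a(\hat{x}) - h(\hat{x})\big)$, while the denominator is the fixed normalization constant $\delta \|z\|$. Hence everything reduces to showing that $z^T\big(y^a(\hat{x}) - h(\hat{x})\big) = \delta \|z\|_*$, after which the common factor $\delta$ cancels and the claimed identity $\rho(y^a(\hat{x}),\hat{x}) = \|z\|_*/\|z\|$ drops out; here $\|z\| \neq 0$ is implicitly assumed so that $\rho$ is well defined.

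To obtain $z^T\big(y^a(\hat{x}) - h(\hat{x})\big) = \delta \|z\|_*$, the cleanest route is to reuse Theorem~\ref{th:dual-norm}. That theorem gives $\dot{h}_S(\hat{x})\big|_{y=y^a(\hat{x})} = \nabla h_S(\hat{x})^T f(\hat{x},u) + \delta \|z\|_*$, while the observer dynamics \eqref{eq:xhat-dot} give the general identity $\dot{h}_S(\hat{x}) = \nabla h_S(\hat{x})^T f(\hat{x},u) + \nabla h_S(\hat{x})^T K\big(y - h(\hat{x})\big)$. Evaluating the latter at $y = y^a(\hat{x})$ and subtracting off the common term $\nabla h_S(\hat{x})^T f(\hat{x},u)$ isolates $z^T\big(y^a(\hat{x}) - h(\hat{x})\big) = \delta \|z\|_*$, as required.

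As an alternative that avoids invoking Theorem~\ref{th:dual-norm}, one can argue directly from the optimization problem \eqref{eq:grad-attack}: with the change of variables $w = y - h(\hat{x})$, problem \eqref{eq:grad-attack} maximizes $z^T w$ over $\|w\| \leq \delta$, whose optimal value is $\delta \|z\|_*$ by the very definition of the dual norm recalled in Theorem~\ref{th:dual-norm}. Thus at the optimizer, $z^T\big(y^a(\hat{x}) - h(\hat{x})\big) = z^T w^\star = \delta \|z\|_*$, and the conclusion follows as before.

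I do not expect a genuine obstacle here: the statement is a direct corollary of either Theorem~\ref{th:dual-norm} or the dual-norm characterization of the optimal value of \eqref{eq:grad-attack}, and the remaining manipulation is just cancelling $\delta$. The only point worth a sentence of care is making explicit that $\rho$ is evaluated under the same nondegeneracy hypothesis ($\|K^T \nabla h_S(\hat{x})\| \neq 0$) that already appears in Corollary~\ref{cor:2norm}, so that the normalization in \eqref{eq:rho} is meaningful.
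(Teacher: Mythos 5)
Your proposal is correct and follows essentially the same route as the paper: the paper's proof simply cites the identity \eqref{eq:dual-norm-K} from the proof of Theorem~\ref{th:dual-norm}, namely $\nabla h_S(\hat{x})^T K\left(y^a(\hat{x})-h(\hat{x})\right) = \delta \|K^T \nabla h_S(\hat{x})\|_*$, which is exactly the key fact you establish (your second, dual-norm argument is literally how the paper derives that identity). Your explicit remark about requiring $\|K^T \nabla h_S(\hat{x})\| \neq 0$ for $\rho$ to be well defined is a small point of care the paper leaves implicit.
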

\begin{proof}
    This is a direct biproduct of the proof of Theorem \ref{th:dual-norm}. Specifically, see \eqref{eq:dual-norm-K}.
\end{proof}

A detector based on \eqref{eq:rho} can, for example, be implemented by thresholding a moving average of $\rho$ as 
\begin{equation}
    \label{eq:ma}
    \frac{1}{T}\int_{t-T}^{t} \rho(y(\tau),\hat{x}(\tau)) d\tau  > \nu,
\end{equation}
with the horizon $T$ and threshold $\nu$.

\section{Numerical Experiments}
\begin{figure}
  \centering
  \begin{tikzpicture}[scale=1]
    \begin{axis}[
        xmin=-1.9,xmax=0.5,
        %ymode=log,
        ymin=0,ymax=1.2,
        xlabel={$x_1$},
        ylabel={$x_2$},
        legend style={at ={(0.5,1.2)},anchor=north}, ymajorgrids,yminorgrids,xmajorgrids,
        %x post scale=0.9,
        y post scale=0.56,
        legend cell align={left},legend columns=2,
        ]
        %\addplot [set19c1, ultra thick, name path=rhs1] coordinates {(0, -10) (0, 10)};
        %\addplot [set19c1, very thick, name path=rhs2] coordinates {(10, -10) (10, 10)};
        \draw[dashed,fill=set19c5, opacity=0.2,draw=red] (axis cs:0,0) rectangle (axis cs:4,4);
        \addplot [name path=f, set19c3,dashed, domain = -2:0,samples=1000, unbounded coords=jump]{sqrt(-2*x)};
        \path[name path=axis] (axis cs:-2,0) -- (axis cs:0,0);
        \addplot[set19c3,opacity=0.2] fill between[of=f and axis];

        \addplot [set19c1,very thick] table [x={x1}, y={x2}] {\gradtwodofnew}; 
        \addplot [set19c3,very thick] table [x={z1}, y={z2}] {\gradtwodofnew}; 
        \legend{,,Actual, Perceived}
    \end{axis}
\end{tikzpicture}
  \caption{The actual trajectory $x(t)$ and the perceived trajectory $\hat{x}(t)$ when the false-data injection attack defined by \eqref{eq:grad-attack} is performed.}
  \label{fig:advtrajs}
\end{figure}
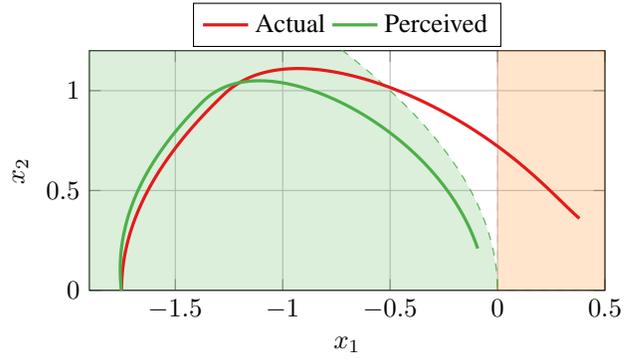
To illustrate the proposed false-data injection attack given in Algorithm~\ref{alg:attack}, we consider the double integrator example in \cite{hobbs2023rta}. Code for all reported experiments is available at {\url{https://github.com/darnstrom/ecc24-sf}}. 

The dynamics of the double integrator system is 
\begin{equation}
  f(x,u) = \begin{pmatrix}
   x_2 \\ u  
  \end{pmatrix}.
\end{equation}
For safety, the admissible states are $\mathcal{X} = \{x \in \mathbb{R}^2 : x_1 \leq 0\}$, and admissible control actions are $\mathcal{U} = \{u\in \mathbb{R} : |u| \leq 1\}$. 

We consider the control invariant set $S $ defined as the zero-superlevel set of $h_S(x) = -2 x_1 - x_2^2$. With the extended class $\mathcal{K}$ function $\alpha(x) = 2x$, the function $h_S$ is a control-barrier function with the corresponding set of safe control actions 
\begin{equation}
    \tilde{\mathcal{U}}_S(x) = \{u\in \mathcal{U}: -2 x_2(1+u) \geq 4 x_1 + 2x_2^2\}.
\end{equation}

For the observer, the gain $K$ is selected as the stationary Kalman gain (obtained by using the process noise $Q=\left(\begin{smallmatrix}
    1 & 0 \\ 0 & 1 
\end{smallmatrix}\right)$ and measurement noise $R = \left(\begin{smallmatrix}
    0.001 & 0 \\ 0 & 0.001 
    \end{smallmatrix}\right)$.) 
    Two scenarios of measurement functions $h$ are consider: either both $x_1$ and $x_2$ are measured (i.e., $h(x) = \left(\begin{smallmatrix}
 x_1 \\ x_2 
\end{smallmatrix}\right)$), or only $x_1$ is measured (i.e., $h(x) = x_1$). In the detector the $2$-norm is used, and its threshold is set to $\delta = 0.001$. 

To activate the safety filter, the desired control action is constantly saturated at ${u_{\text{des}} =1}$, which drives $x_1(t) \to \infty$ as $t\to \infty$ if there would be no safety filter.  

\begin{figure*}
  \centering
  \subfloat[Safety margin \label{subfig:margin}]{%
      \begin{tikzpicture}[scale=1]
    \begin{axis}[
        xmin=0,xmax=3,
        ylabel={$h_S$},
        xlabel={Time [s]},
        legend style={at ={(0.69,0.98)},anchor=north}, ymajorgrids,yminorgrids,xmajorgrids,
        %x post scale=0.9,
        y post scale=0.56,
        legend cell align={left},legend columns=1,
        ]
        \addplot [set19c2,very thick] table [x={t}, y={hx}] {\randomnew}; 
        \addplot [set19c4,very thick] table [x={t}, y={hx}] {\gradonedofnew}; 
        \addplot [set19c1,very thick] table [x={t}, y={hx}] {\gradtwodofnew}; 
        \legend{Random, Actual ($h(x)=x_1$), Actual ($h(x)= \left(\begin{smallmatrix}
         x_1 \\ x_2 
\end{smallmatrix}\right)$)}
    \end{axis}
\end{tikzpicture}
  }
  \hfill
  \subfloat[Trajectories \label{subfig:trajs}]{%
      \begin{tikzpicture}[scale=1]
    \begin{axis}[
        xmin=-1.9,xmax=0.5,
        %ymode=log,
        ymin=0,ymax=1.2,
        xlabel={$x_1$},
        ylabel={$x_2$},
        legend style={at ={(0.5,1.2)},anchor=north}, ymajorgrids,yminorgrids,xmajorgrids,
        %x post scale=0.9,
        y post scale=0.56,
        legend cell align={left},legend columns=3,
        ]
        %\addplot [set19c1, ultra thick, name path=rhs1] coordinates {(0, -10) (0, 10)};
        %\addplot [set19c1, very thick, name path=rhs2] coordinates {(10, -10) (10, 10)};
        \draw[dashed,fill=set19c5, opacity=0.2,draw=red] (axis cs:0,0) rectangle (axis cs:4,4);
        \addplot [name path=f, set19c3,dashed, domain = -2:0,samples=1000, unbounded coords=jump]{sqrt(-2*x)};
        \path[name path=axis] (axis cs:-2,0) -- (axis cs:0,0);
        \addplot[set19c3,opacity=0.2] fill between[of=f and axis];

        %\addplot [set19c3,very thick] table [x={x1}, y={x2}] {\nonadversnew}; 
        \addplot [set19c2,very thick] table [x={x1}, y={x2}] {\randomnew}; 
        \addplot [set19c4,very thick] table [x={x1}, y={x2}] {\gradonedofnew}; 
        \addplot [set19c1,very thick] table [x={x1}, y={x2}] {\gradtwodofnew}; 
        %\addplot [set19c3,very thick] table [x={x1}, y={x2}] {\adverstwo}; 
        %\addplot [set19c3,very thick] table [x={x1}, y={x2}] {\grad}; 
        %\addplot [mark=none,very thick, dashed, set19c2,domain=0:0:10] {100};
        %\legend{,,Nominal, 1DOF, 2DOF} 
    \end{axis}
\end{tikzpicture}
}
\caption{The safety margin and resulting state trajectories when an attack with random directions according to \eqref{eq:random-attack} is occurring, and when an adversary performs false-data injection attacks according to \eqref{eq:grad-attack} with  $h(x) = x_1$ and $h(x)=\left(\begin{smallmatrix}
 x_1 \\ x_2 
\end{smallmatrix}\right)$, respectively.}
\end{figure*}
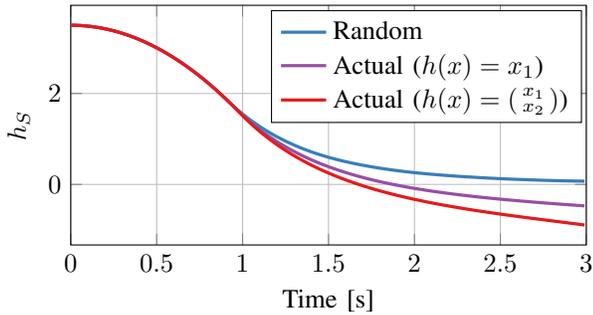
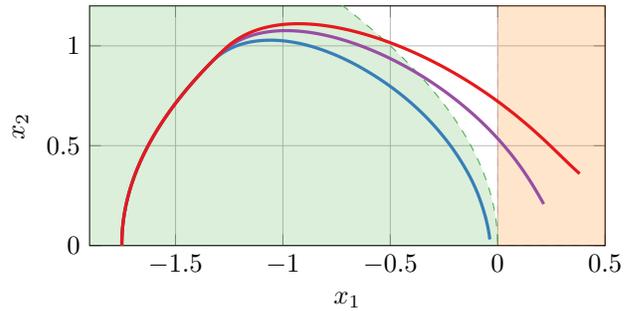

For the implementation of the safety filter, the quadratic programming solver DAQP \cite{arnstrom2022daqp} is used to solve optimization problems of the form \eqref{eq:grad-attack}.

The attack is active for the entire duration of the simulation. Since the $2$-norm is used in the detector we have that  ${y(t) = h(\hat{x}(t))+\delta \frac{K^T \nabla h_S(\hat{x}(t))}{\|K^T \nabla h_S(\hat{x}(t))\|_2}}$ for all $t\geq 0$ from the closed-form expression in Corollary~\ref{cor:2norm}. 

First we consider an attack when both states are measured. The system is started in $x(0)= \left(\begin{smallmatrix}
 -1.75 \\ 0 
\end{smallmatrix}\right)$ and is simulated for $3$ seconds. Figure \ref{fig:advtrajs} illustrates the resulting \textit{perceived} trajectory (i.e., $\hat{x}(t)$) and the \textit{actual} trajectory of the system (i.e., $x(t)$). The green region is the safe set $S$ and the red region marks the inadmissible states. As can be seen, the false-data injection makes the observer believe that the states remain within the safe set, while in actuality they leave it and pass over to the inadmissible states; the adversary's objective of making the system unsafe is, hence, achieved.

Next, we consider the same setup except that only the first state is measured (i.e., $h(x) = x_1$). The resulting trajectory $x(t)$ is shown in Figure \ref{subfig:trajs} together with the trajectory from the first scenario and for when a stealthy attack with random directions of the form 
\begin{equation}
    \label{eq:random-attack}
y=h(\hat{x})+\delta \frac{e}{\|e\|_2},\qquad e \sim \mathcal{N}(0,1),
\end{equation}
is performed. The corresponding safety margin over the simulation is shown in Figure \ref{subfig:margin}. The results illustrate that the safety filter remains active when a naive random attack of the form \eqref{eq:random-attack} is performed. In contrast, the attack in \eqref{eq:grad-attack} deactivates the safety filter, and the safe set is exited quicker when both states are measured, which is expected since then the adversary is given more degrees of freedom in the attack. 

Finally, we illustrate how the correlation measure $\rho(y,\hat{x})$ in \eqref{eq:rho} can expose the proposed attack. Figure \ref{fig:rho} shows $\rho$ from the first scenario when both states are measured and the attack is active throughout the entire simulation. In addition, Figure \ref{fig:rho} shows $\rho$ for when an attack with random directions according to \eqref{eq:random-attack} is performed. Note that, in both instances, the attacks are stealthy w.r.t. the residual $y-h(\hat{x})$. In contrast to a random attack, $\rho$ is constantly $1$ for the attack from \eqref{eq:grad-attack} (as is expected from Corollary~\ref{cor:rho} since the dual norm of the 2-norm is also the 2-norm.) Figure \ref{fig:rhoma} shows a corresponding moving average of $\rho$ according to \eqref{eq:ma} (with a horizon of $T=0.25$ seconds) for the two attacks. For a threshold of, e.g., $\nu = 0.9$, an attack that biases state estimates toward the interior of $S$ is correctly detected when the attack in \eqref{eq:grad-attack} is active, while the random attack correctly remains undetected. 

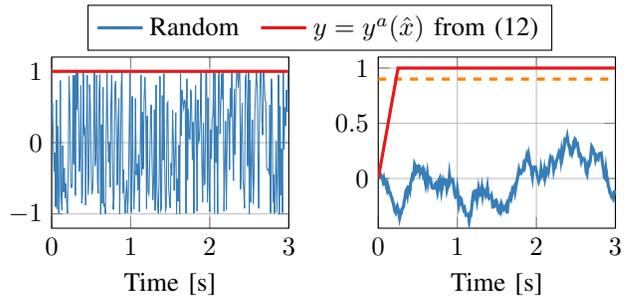
\begin{figure}
  \centering
  \begin{tikzpicture}
      \begin{axis}[%
          hide axis,
          xmin=0, xmax=1,
          ymin=0, ymax=1,
          legend style={legend cell align=left, legend columns=2},
          legend style={/tikz/every even column/.append style={column sep=0.2cm}}
          ]
          \addlegendimage{set19c2,very thick}
          \addlegendimage{set19c1,very thick}
          \legend{Random, $y=y^a(\hat{x})$ from \eqref{eq:grad-attack}}
      \end{axis}
  \end{tikzpicture}
  \subfloat[Correlation $\rho(y,\hat{x})$ \label{fig:rho}]{%
      \begin{tikzpicture}[scale=1]
    \begin{axis}[
        xmin=0,xmax=3,
        xlabel={Time [s]},
        legend style={at ={(0.5,1.3)},anchor=north}, ymajorgrids,yminorgrids,xmajorgrids,
        x post scale=0.46,
        y post scale=0.4,
        legend cell align={left},legend columns=2,
        ]
        \addplot [set19c2] table [x={t}, y={rho}] {\randomnew}; 
        \addplot [set19c1,very thick] table [x={t}, y={rho}] {\gradtwodofnew}; 
        %\legend{Random, $y=y^a(\hat{x})$ from \eqref{eq:grad-attack}}
    \end{axis}
\end{tikzpicture}
  }
  \hfill
  \subfloat[Moving average ($T=0.25$) \label{fig:rhoma}]{%
      \begin{tikzpicture}[scale=1]
    \begin{axis}[
        xmin=0,xmax=3,
        ymin=-0.45,ymax=1.1,
        xlabel={Time [s]},
        legend style={at ={(0.5,1.3)},anchor=north}, ymajorgrids,yminorgrids,xmajorgrids,
        x post scale=0.46,
        y post scale=0.4,
        legend cell align={left},legend columns=2,
        ]
        \addplot[mark=none,very thick, dashed, set19c5,domain=0:5] {0.9};
        \addplot [set19c2,very thick] table [x={t}, y={MA}] {\randomnew}; 
        \addplot [set19c1,very thick] table [x={t}, y={MA}] {\gradtwodofnew}; 
        %\legend{Random, $y=y^a(\hat{x})$ from \eqref{eq:grad-attack}}
    \end{axis}
\end{tikzpicture}
}
\caption{The correlation measure $\rho$ defined in \eqref{eq:rho} and a moving average according to \eqref{eq:ma} under two different attacks: a stealthy false-data injection attack with random directions according to \eqref{eq:random-attack}, and a stealthy false-data injection attack according to  \eqref{eq:grad-attack}. An example threshold (\textcolor{set19c5}{$\nu = 0.9$}) for a detector of the form \eqref{eq:ma} is shown as a dashed line, which would detect the attack in \eqref{eq:grad-attack} after about 0.2 seconds.}
\label{fig:detector}
\end{figure}

\section{Conclusion}
We have proposed a simple and stealthy false-data injection attack for deactivating CBF-based safety filters. The attack injects false sensor measurements to bias state estimates to the interior of a safety region, which makes the safety filter accept unsafe control actions. We have also proposed a detector that detects biases manufactured by the proposed attack policy, which complements conventional detectors when safety filters are used. 
We have illustrated that an adversary can successfully make a system unsafe by performing the proposed false-data injection attack on a double integrator system. Moreover, we have shown with the same example that the proposed detector can be used to detect when such a false-data injection attack is happening.

Future work includes considering similar attacks but with less information required by the adversary; for example by considering that only the observations $y$ are available to the adversary. We will also explore if a receding horizon attack can be more effective that the attack considered herein. 

\bibliographystyle{IEEEtran}
\linespread{0.9}\selectfont
\bibliography{lib.bib}

\appendix
\subsection{Proof of Corollary \ref{cor:2norm}}
\label{ap:pf-2norm}
\begin{proof}
    First, note that the constraint ${\|y-h(\hat{x})\|_2 \leq \delta}$ can be equivalently stated in the differentiable form ${\|y-h(\hat{x})\|^2_2 \leq \delta^2}$.  The KKT-conditions for \eqref{eq:grad-attack} with the 2-norm constraint are then given by 
    \begin{subequations}
        \begin{align}
            \gradxhat + \lambda(y^*-h(\hat{x})) &= 0 \label{pf-cor1-stat}\\
            \|y^*-h(\hat{x})\|_2 &\leq \delta \\
            \lambda &\geq 0  \\
            \lambda (\delta-\|y^*-h(\hat{x})\|_2) &= 0 \label{pf-cor1-comp}
        \end{align}
    \end{subequations}
    If $\gradxhat \neq 0$ we have that $\lambda \neq 0$ from the stationarity condition \eqref{pf-cor1-stat}. Hence, we can rewrite \eqref{pf-cor1-stat} as  
\begin{equation}
    \label{eq:pf-cor1-stat}
    y^*-h(\hat{x}) = \frac{\gradxhat}{\lambda}.
\end{equation}
Since $\lambda \neq 0$, the complementarity condition \eqref{pf-cor1-comp} implies that $\|y^*-h(\hat{x})\|_2 = \delta$. Taking the 2-norm of both sides of \eqref{eq:pf-cor1-stat} therefore gives  
\begin{equation}
    \label{eq:pf-cor1-final}
    \delta = \frac{\|\gradxhat\|_2}{\lambda} \Leftrightarrow \lambda = \frac{\|\gradxhat\|_2}{\delta}.
\end{equation}
Inserting \eqref{eq:pf-cor1-final} into \eqref{eq:pf-cor1-stat} gives
\begin{equation}
    y^* = h(\hat{x})+\delta \frac{\gradxhat}{\|\gradxhat\|_2}.
\end{equation}
\end{proof}
\subsection{Proof of Corollary \ref{cor:infnorm}}
\label{ap:pf-infnorm}
\begin{proof}
First, note that the constraint $\|y-h(\hat{x})\|_{\infty} \leq \delta$ can be split into the linear inequalities $y-h(\hat{x}) \leq \delta$ and $-y+h(\hat{x})\leq \delta$. The KKT-conditions for \eqref{eq:grad-attack} with the $\infty$-norm constraint are then given by 
\begin{subequations}
    \begin{align}
        \gradxhat &=-\lambda^+ +\lambda^-,  \label{pf-cor2-stat}\\
        y-h(\hat{x}) &\leq \delta, \quad -y+h(\hat{x}) \leq \delta\\
        \lambda^+ &\geq 0,\quad \lambda^-\geq 0 \\
        [\lambda^+]_i [\delta-h(\hat{x})+y]_i  &= 0,\quad [\lambda^-]_i [-\delta+h(\hat{x})-y]_i \label{pf-cor2-comp}.
    \end{align}
\end{subequations}
If $[\gradxhat]_i > 0$, we have that $[\lambda^+]_i > 0$ for the stationarity condition in \eqref{pf-cor2-stat} to hold. Similarly, $[\gradxhat]_i < 0$ requires $[\lambda^-] > 0$. This together with the complementary conditions \eqref{pf-cor2-comp} gives 
\begin{equation}
    [y]_i = \begin{cases}
        h(\hat{x})+\delta, &\text{ if } [\gradxhat]_i > 0,\\
        h(\hat{x})-\delta, &\text{ if } [\gradxhat]_i < 0,\\
    \end{cases}  
\end{equation}
which can be compactly written as 
\begin{equation}
    y = h(\hat{x})+\delta \sgn(\gradxhat).
\end{equation}
\end{proof}
\subsection{Proof of Theorem \ref{th:dual-norm}}
\label{ap:pf-dualnorm}

\begin{proof}
    First we expand the expression for $\dot{h}(\hat{x})$ and insert $y=y^a(\hat{x})$ from \eqref{eq:closedform-attack}, which gives  
    \begin{equation*}
      \begin{aligned}
          \dot{h}_S(\hat{x}) &= \nabla h_S(\hat{x}) \dot{\hat{x}}  \\  
                             &=  \nabla h_S(\hat{x})^T\left(f(\hat{x},u)+K\left(y^a(\hat{x})-h(\hat{x})\right)\right) \\
                             &=  \nabla h_S(\hat{x})^T f(\hat{x},u) + \nabla h_S(\hat{x})^TK\left(y^a(\hat{x})-h(\hat{x})\right).
      \end{aligned}
    \end{equation*}
    
    Next, we intend to rewrite the second term $\nabla h_S(\hat{x})^TK\left(y^a(\hat{x})-h(\hat{x})\right)$.
    From the definition of $y^a(\hat{x})$ in \eqref{eq:grad-attack} we have
\begin{equation*}
    \begin{aligned}
        \nabla h_S(\hat{x})^T K y^a(\hat{x}) &= \max_{y:\|y-h(\hat{x})\|\leq \delta}\nabla h_S(\hat{x})^T K y \\
                                             &= \max_{\tilde{y} : \|\tilde{y}\| \leq 1} \nabla h_S(\hat{x})^T K (\delta \tilde{y}-h(\hat{x})) \\
                                             &= \max_{\tilde{y} : \|\tilde{y}\| \leq 1} \delta \nabla h_S^T K  \tilde{y} - \nabla h_S^T K h(\hat{x}) \\
                                             &= \|\delta K^T \nabla h_S(\hat{x}) \|_*  - \nabla h_S(\hat{x})^T K h(\hat{x}),
    \end{aligned}
\end{equation*}
where the variable change $\tilde{y} = \frac{1}{\delta}(y-h(\hat{x}))$ has been made in the second equality and the definition of $\|\cdot\|_*$ has been used in the last equality. Equivalently, we then have
\begin{equation}
    \label{eq:dual-norm-K}
    \nabla h_S(\hat{x})^T K (y^a(\hat{x})-h(\hat{x})) = \|\delta K^T \nabla h_S(\hat{x}) \|_*.
\end{equation}
Inserting this in the expression of $\dot{h}_S(\hat{x})$ gives the desired result.
\end{proof}
\end{document}